\newcounter{defcounter}
\newenvironment{definition}{
	\refstepcounter{defcounter}
	\par\noindent\textbf{Definition \thedefcounter.}\itshape
}{
	\par
}
\newcounter{procounter}
\newenvironment{proposition}{
	\refstepcounter{procounter}
	\par\noindent\textbf{Proposition \theprocounter.}\itshape
	\space
}{
	\par
}
\newcounter{thecounter}
\newenvironment{theorem}{
	\refstepcounter{thecounter}
	\par\noindent\textbf{Theorem \thethecounter.}\itshape
	\space
}{
	\par
}
\newcounter{exacounter}
\newenvironment{example}{
	\refstepcounter{exacounter}
	\par\noindent\textbf{Example \theexacounter.}\itshape
	\space
}{
	\par
}
\newcounter{lemcounter}
\newenvironment{lemma}{
	\refstepcounter{lemcounter}
	\par\noindent\textbf{Lemma \thelemcounter.}\itshape
	\space
}{
	\par
}
\newcounter{errcounter}
\newenvironment{errthm}{
	\refstepcounter{errcounter}
	\par\noindent\textbf{Erroneous Theorem \theerrcounter.}\itshape
	\space
}{
	\par
}
\newcounter{coucounter}
\newenvironment{counterexample}{
	\refstepcounter{coucounter}
	\par\noindent\textbf{Counterexample \theerrcounter.}\itshape
	\space
}{
	\par
}
\begin{document}

\title[The Invertibility of Cellular Automata with Menory]{The Invertibility of Cellular Automata with Menory: Correcting Errors and New Conclusions}

\author{Chen Wang$^1$ \footnote{2120220677@mail.nankai.edu.cn}, Xiang Deng$^2$ \footnote{2120230763@mail.nankai.edu.cn}, Chao Wang$^{3,*}$ \footnote{wangchao@nankai.edu.cn}}
\ead{2120220677@mail.nankai.edu.cn}
\ead{2120230763@mail.nankai.edu.cn}
\ead{wangchao@nankai.edu.cn}

\address{Address: College of Software, Nankai University, Tianjin 300350, China}

\vspace{10pt}

\begin{indented}
\item[]Jun 2024
\end{indented}

\begin{abstract}
	Cellular automata with memory (CAM) are widely used in fields such as image processing, pattern recognition, simulation, and cryptography. The invertibility of CAM is generally considered to be chaotic. Paper [Invertible behavior in elementary cellular automata with memory, Juan C. Seck-Tuoh-Mora et al., Information Sciences, 2012] presented necessary and sufficient conditions for the invertibility of elementary CAM, but it contains a critical error: it classifies identity CAM as non-invertible, whereas identity CAM is undoubtedly invertible. By integrating Amoroso's algorithm and cycle graphs, we provide the correct necessary and sufficient conditions for the invertibility of one-dimensional CAM. Additionally, we link CAM to a specific type of cellular automaton that is isomorphic to CAM, behaves identically, and has easily determinable invertibility. This makes it a promising alternative tool for CAM applications.
\end{abstract}
\vspace{2pc}
\noindent{\it Keywords}: cellular automata with memory, invertibility, memory cellular automata, cycle graph

\section{Introduction \label{s1}}
\textbf{Cellular Automaton} (CA) is a discrete model, consisting of a rule and a grid of one or more dimensions. Each grid position has a state, and these states are updated in parallel at discrete time steps based on a fixed local rule \cite{1951_Neumann}. CA has a wide range of applications in simulation, encryption, pseudorandom number generation, and more \cite{jun2009image, kang2008pseudorandom, rosin2014cellular}.

Generally, CA are memoryless, meaning that the evolution of a CA at the next time step depends only on its current state and not on its historical states. \textbf{Cellular automata with memory} (CAM) is a variation of CA that incorporates historical memory into its functions. Due to the addition of memory, the behavior of CAM is more complex and multifunctional than that of CA, enabling it to perform a broader range of unconventional computations \cite{alonso2008cellular, alonso2007elementary, martinez2009complex, martinez2010make}. 

\textbf{Invertibility}, which relates to the preservation of information in discrete dynamic systems, has been widely studied from CA \cite{1979_Bruckner, MARTINDELREY20118360, 1991_Sutner, 1990_Kari} to CAM \cite{sanz2003reversible, alonso2007reversible, mcintosh1991reversible}. 
For a long time, the invertibility of CAM has been considered to be very complex: ``CA with memory in cells would result in a qualitatively different behavior" \cite{alonso2007elementary, wuensche1993global}. In 2012, Seck-Tuoh-Mora, etc. provided the necessary and sufficient conditions for the invertibility of CAM \cite{SECKTUOHMORA2012125}. However, we have found a counterexample to their theorem: they classified the identity CAM, which is invertible, as non-invertible. 

Since CAM do not belong to the strictly defined category of CA, existing algorithms for determining the invertibility of CA are incompatible, making the determination of CAM's invertibility more challenging. To address this, we introduce a new concept, \textbf{memory cellular automata} (MCA).
We prove that for any CAM, there exists an isomorphic MCA that exhibits the same behavior. Because MCA is a kind of CA, it has better properties and is compatible with existing algorithms developed for CA \cite{1972_Amoroso,wang2024computation}. It is anomalous that the invertibility of CAM is not equivalent to the invertibility of MCA. This discrepancy is one of the key reasons for the errors that have occurred in \cite{SECKTUOHMORA2012125}. Ultimately, we discovered that MCA is the key to addressing the complex invertibility of CAM. By integrating Amoroso's algorithm \cite{1972_Amoroso} and cycle graphs \cite{seck2018graphs}, we provided the correct necessary and sufficient conditions for the invertibility of CAM. Additionally, MCA exhibits the same evolutionary behavior as CAM but possesses superior properties and more extensive computational capabilities. This makes MCA a viable alternative model to CAM in many applications. The comparison of MCA and CAM is illustrated in Table \ref{table1}.

\begin{table}[h]
	\center
	\caption{Comparison of CAM and MCA}
	\label{table1}
	\begin{tabular}{p{5cm}p{4cm}p{4cm}}
		\hline
		& CAM & MCA\\
		\hline
		computational efficiency & high     & low    \\
		attribute                & complex  & explicit  \\
		algorithms               & less     & more    \\
		\hline
	\end{tabular}
\end{table}

This paper is comprised of five sections. The second section introduces the relevant definitions of CA and CAM, and discusses the erroneous theorem in \cite{SECKTUOHMORA2012125} along with its counterexample. In Section \ref{s3}, we introduce the isomorphic MCA of CAM, discuss the relationship between their invertibility, and identify the reason for the errors. In Section \ref{s4}, utilizing the cycle graph of MCA and Amoroso's algorithm, we successfully provide the necessary and sufficient conditions for the invertibility of CAM. The final section summarizes the entire work of this paper.

\section{Preliminaries and the erroneous theorem \label{s2}}
\begin{definition}
	A one-dimensional cellular automaton with periodic boundaries is defined by a triad: $A' = \{S, M, \varphi_m \}$
	\begin{itemize}
		\item $S=\{0,1,\ldots, p-1\}$ is a finite set of states representing a cell's state in CA. At any time, the state of each cell is an element in this set. We use $p$ to count the number of elements in this set.
		\item $M=(\vec c^1, \vec c^2, \ldots , \vec c^m)$ represents the neighbor vector, where $\vec c^i \in \mathbb{Z}^d$, and $\vec c^i \neq \vec c^j$ when $i \neq j$ $(i, j = 1, 2,\ldots, m)$. Thus, the neighbors of the cell $\vec c \in \mathbb{Z}^d$ are the $m$ cells $\vec c + \vec c^i, i = 1, 2,\cdots, m$. 
		\item $\varphi_m$: $S^m \rightarrow S$ is the rule function. The rule maps the current state of a cell and all its neighbors to this cell's next state. 
	\end{itemize}
\end{definition}

\begin{definition}
	Configuration is a set of snapshots of all automata in CA at a certain time, which can be represented by the function $C: \mathbb{Z} \rightarrow S$. These cells interact locally and transform their states in parallel. Function $\Phi$ is used to represent this global transformation. If there are two configurations $c_1$, $c_2 \in C$ and $\Phi(c_1) = c_2$, then $c_1$ is a predecessor of $c_2$ while $c_2$ is a successor of $c_1$. If $c_2$ has no predecessors, then it is a \textbf{Garden-of-Eden} configuration.
\end{definition}

CA are non-historical, meaning their evolution at any given moment depends only on their current configuration and is independent of their predecessors or ancestors. In contrast, the evolution of CAM is influenced not only by the current configuration but also by the past states of the cells. 
Therefore, the definition of one-dimensional CAM expands from the triad to a quadruple.

\begin{definition} \cite{SECKTUOHMORA2012125}
	A one-dimensional cellular automaton with memory (periodic boundary) is defined by a quadruple: $A = \{S, M, \varphi_m, \beta_n \}$
	\begin{itemize}
		\item The definitions of $S$, $M$ and $\varphi_m$ are the same as that in CA, representing the set of states, neighbor vector and rule. $\Phi$ is the global transformation function of $\varphi_m$.
		\item $\beta_n: S^n \rightarrow S$ is the memory function, where $n$ is the memory size. $B$ is the global transformation function corresponding to $\beta_n$. 
		In time $t$, $B$ acts on the current and past $n$ configurations $\Gamma_n(c_t) = (c_{t-n+1},\cdots,c_{t})$ to evolve an intermediate configuration $d_t$. Then $d_t$ is processed by $\Phi$ to generate the new configuration. 
	\end{itemize}	
\end{definition}

The dynamics of the CAM begin at a given initial configuration $c_1 \in C$, where the subscript indicates the current moment in time. The following equation shows its evolution:

\begin{equation}
	\label{CAM}
	c_{t+1}= \left\{ 
	\begin{array}{ll}
		\Phi(c_{t}) & \textrm{if } t<n, \\
		\Phi \circ B \circ \Gamma_n(c_{t}) & \textrm{if } t \ge n.
	\end{array}
	\right.
\end{equation}

\begin{itemize}
	\item When $t \ge n$, $\beta_n$ initially acts on the $n$ historical states of each cell $c^i_t$ in configuration $c_t$ to evolve an intermediate configuration $d_t$. Then, $d_t$ is processed by  $\Phi$ to obtain the new configuration $c_{t+1}$.
	\item When $t<n$, since the number of configurations stored has not reached the memory size, the function $B$ does not run and can be considered as an identity function.
\end{itemize}

\begin{figure}[h]
	\center
	\includegraphics[width=0.6\linewidth]{./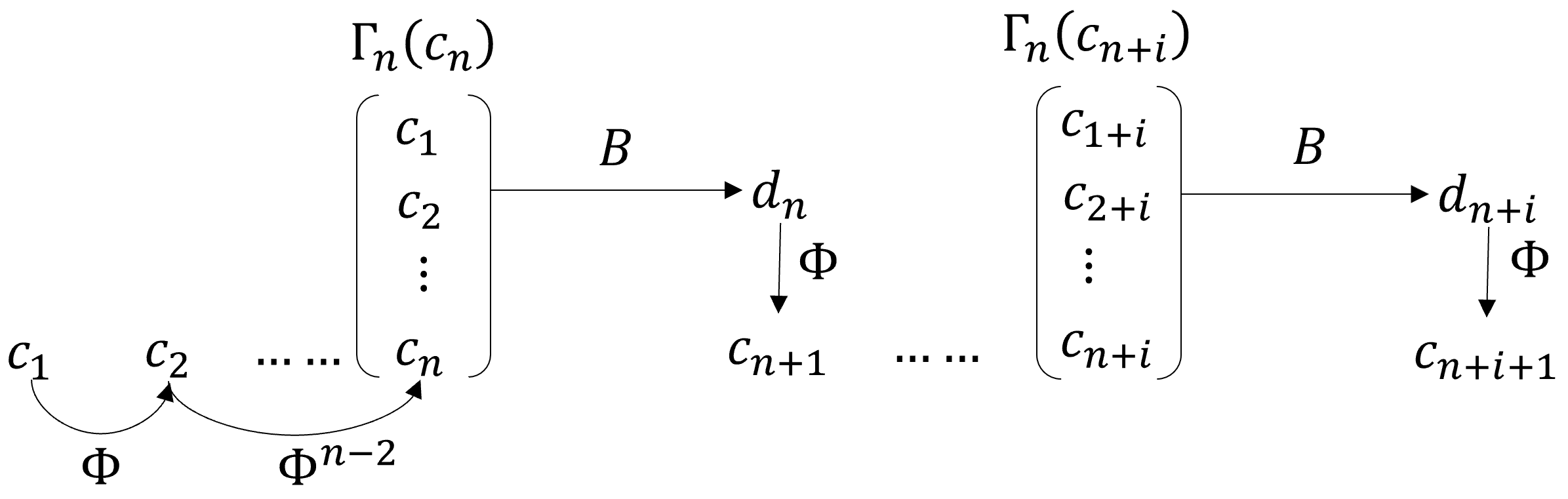}
	\caption{the evolution of CAM}
	\label{evolutionG}
\end{figure}

\begin{definition}
	A CAM is invertible iff for any $t>n$, $c_t$ is uniquely determined by $c_{t-n},\cdots,c_{t-1}$. This bijective mapping can be expressed as $c_{t-n} 
	\stackrel{\Gamma_{n-1}(c_{t-1})}{\longleftarrow} c_{t}$.
	
\end{definition}

Paper \cite{SECKTUOHMORA2012125} presents a theorem for the invertibility of CAM, but we hold that there is an error. Below is the theorem and the counterexample we have identified.

\begin{errthm}
	\label{err} 
	Suppose the CA: $A' = \{S, M, \varphi_m \}$ is invertible. The CAM: $A = \{S, M, \varphi_m, \beta_n \}$ is invertible iff $\beta_n$ is permutative, that is, $\forall x,y \in S$, $x \ne y$ and $\forall w \in S^{n-1}$, $\beta_n(xw) \ne \beta_n(yw)$ is satisfied.
\end{errthm}
\begin{counterexample}
	\label{CE}
	There is a counterexample that $\beta_n$ is not permutative but the CAM is invertible.
	
	Assuming $S=\{0,1\}$ and $\beta_n$, $\varphi_m$ are identity functions $(m=3, n=3)$, they are specifically given by the following equations:
	
	\begin{equation}
		\label{id}
		\beta_3(xyz)=z, \ \ \ \ \varphi_3(abc)=b.
	\end{equation}
	CA: $A' = \{S, M, \varphi_3 \}$ is identical, and it is easily shown to be invertible. Then the CAM: $A = \{S, M, \varphi_3, \beta_3 \}$ is also invertible, because for any initial configuration 
	$c_1$, all subsequent configurations will always be equal to $c_1$, meaning the inverse of every configuration is itself. The mapping table of the $\beta_3$ is shown in Eq. \ref{beta}, and it is easy to see that $\beta_3$ is not permutative.
	
	\begin{equation}
		\label{beta}
		\begin{array}{c|cccc}
			\beta_3(xyz)  & yz=00 & yz=01 & yz=10 &yz=11\\
			\hline
			x=0 & 0 & 1 & 0 & 1\\
			x=1 & 0 & 1 & 0 & 1\\
		\end{array}
	\end{equation}

	Therefore, there exists an invertible CAM where $\varphi_m$ is invertible but $\beta_n$ is not permutative, proving the theorem incorrect. Many similar CAM can be constructed, and we choose the most typical one as a counterexample since the invertibility of an identical CAM requires no explanation.
	
\end{counterexample}

\section{Isomorphic CA of CAM \label{s3}}
To address the issue mentioned in the previous section and correct the erroneous conclusion, we introduce a new concept: isomorphic CA for CAM. In this section, we will introduce the relationship between CAM and its isomorphic CA.

\begin{definition}	
	There is a bijective mapping $g:A \rightarrow A'$ where $A = \{S, M, \varphi_m, \beta_n \}$, $A'=\{S',M',f\}$ and
	\begin{itemize}
		\item $S'=S^n$. Each cell stores $n$ states which correspond to the memory size.
		\item $M'=M$. The neighborhood is the same.
		\item $f= \varphi_m \odot \beta_n$. Without losing generality, if $m$ is odd and the neighborhoods on both sides of a cell are equal, then it can be expressed as Eq. \ref{rule}. 
		\begin{eqnarray}
			\label{rule}
			\fl
			f(c_{t}^{i}c_{t+1}^{i} \cdots c_{t+n-1}^{i}, c_{t}^{i+1}c_{t+1}^{i+1} \cdots c_{t+n-1}^{i+1}, \cdots,
			c_{t}^{i+m-1}c_{t+1}^{i+m-1} \cdots c_{t+n-1}^{i+m-1}) \\
			\fl
			= c_{t+1}^{i+(m-1)/2}c_{t+2}^{i+(m-1)/2} \cdots c_{t+n-1}^{i+(m-1)/2}\varphi_m(\beta_n(c_{t}^{i}c_{t+1}^{i} \cdots c_{t+n-1}^{i}), \cdots, \beta_n(c_{t}^{i+m-1}c_{t+1}^{i+m-1} \cdots c_{t+n-1}^{i+m-1})) \nonumber
		\end{eqnarray}
		Apply the $\beta_n$ to the  $m$ cells within a cell's neighborhood. The resulting value is then processed by the $\varphi_m$ operation to obtain a new state value, which is set as the cell's latest state. The cell's original $n$ states are shifted forward one by one, and the earliest state is discarded. 
		The global transformation function corresponding to $f$ is denoted as $F$.
	\end{itemize}
	$A$ is isomorphic to $A'$ and we call $A'$ the memory CA (MCA).
\end{definition}

\begin{example}
	\label{e1}
	There is a mapping from the CAM introduced in Counterexample \ref{CE} to MCA: $A'=\{S',M',f\}$ where

	\begin{equation}
		\label{iso}
		\fl
		\left\{ 
		\begin{array}{ll}
			S' = S^3 \\
			M' = M\\
			f(x_1x_2x_3,y_1y_2y_3,z_1z_2z_3)=y_2y_3\varphi_3(\beta_3(x_1x_2x_3),\beta_3(y_1y_2y_3),\beta_3(z_1z_2z_3))=y_2y_3y_3
		\end{array}
		\right.
	\end{equation}
	
\end{example}

\begin{definition}
	For CAM A, the \textbf{initial configuration set}  $\hat{C_A}=\{(c_{1},\cdots,c_{n}) \in C_A^n\ | \forall i \in \mathbb{Z}, 1\le i<n, c_{i+1}=\Phi(c_{i})\}$, where $C_A$ is the configuration set of $A$. 
\end{definition}

\begin{definition}
	For CAM $A$, the \textbf{continuous configuration set} $\bar{C}_A = \{(c_{1+t},\cdots,c_{n+t}) \in C_A^n\ | t \in \mathbb{Z}_{+} \cup \{0\}, \exists (c_{1},\cdots,c_{n})\in \hat{C_A}, \forall i \in \mathbb{Z}_+, n-t <i \leq n, c_{i+t} = (\Phi \circ B \circ \Gamma_n)^{i+t-n}(c_{n}) \}$ is the set of continuous $n$ configurations during the evolution.
\end{definition}
\begin{theorem}
	For CAM $A= \{S, M, \varphi_m, \beta_n \}$, if $|S|=s$, the memory size is $n$ and the number of cells is $u$, then $\hat{C_A} \subseteq  \bar{C}_A$ and $s^u = |\hat{C_A}| \leq |\bar{C}_A| \le |C_{A}^{n}| = s^{un}$.
\end{theorem}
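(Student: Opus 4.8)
The plan is to verify the single set inclusion and then the four numerical (in)equalities, all of which reduce to elementary counting once the definitions are unpacked. First I would compute $|\hat{C_A}|$ by exhibiting an explicit bijection between $C_A$ and $\hat{C_A}$. By definition an element of $\hat{C_A}$ is a tuple $(c_1,\dots,c_n)$ subject to $c_{i+1}=\Phi(c_i)$ for $1\le i<n$; hence once $c_1$ is fixed the remaining entries are forced, namely $c_i=\Phi^{i-1}(c_1)$. Thus the assignment $c_1\mapsto(c_1,\Phi(c_1),\dots,\Phi^{n-1}(c_1))$ is a bijection $C_A\to\hat{C_A}$, and since a configuration on $u$ cells over an alphabet of size $s$ can be chosen in $s^u$ ways, we have $|C_A|=s^u$ and therefore $|\hat{C_A}|=s^u$.

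Next I would establish $\hat{C_A}\subseteq\bar{C}_A$ by specialising the parameter $t$ in the definition of $\bar{C}_A$ to $t=0$. With $t=0$ the displayed tuple $(c_{1+t},\dots,c_{n+t})$ is simply $(c_1,\dots,c_n)$, while the evolution constraint ranges over indices $i$ with $n-t<i\le n$, i.e. $n<i\le n$, which is empty; the constraint is therefore vacuous, and the only surviving requirement is that $(c_1,\dots,c_n)\in\hat{C_A}$. Hence every element of $\hat{C_A}$ lies in $\bar{C}_A$, which gives the inclusion and, by monotonicity of cardinality, the middle inequality $|\hat{C_A}|\le|\bar{C}_A|$.

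For the remaining upper bound I would note that $\bar{C}_A$ is by construction a subset of $C_A^{n}$, the set of all $n$-tuples of configurations. Since each of the $n$ coordinates independently ranges over the $s^u$ configurations in $C_A$, we get $|C_A^{n}|=(s^u)^n=s^{un}$, whence $|\bar{C}_A|\le|C_A^{n}|=s^{un}$. Chaining the three facts produces $s^u=|\hat{C_A}|\le|\bar{C}_A|\le s^{un}$, exactly as claimed.

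The argument is essentially bookkeeping, so no deep obstacle is expected; the one point demanding care is the correct reading of the quantifier range $n-t<i\le n$ in the definition of $\bar{C}_A$, since it is precisely the emptiness of this range at $t=0$ that collapses the evolution condition and makes the inclusion $\hat{C_A}\subseteq\bar{C}_A$ immediate. I would verify that boundary case explicitly before asserting vacuity, and likewise confirm that the periodic-boundary convention really does yield $|C_A|=s^u$ rather than some larger count over all of $\mathbb{Z}$.
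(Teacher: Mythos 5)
Your proposal is correct and follows essentially the same route as the paper's own proof: counting $|\hat{C_A}|=s^u$ via the fact that the tuple is determined by $c_1$, obtaining the inclusion $\hat{C_A}\subseteq\bar{C}_A$ by taking $t=0$ in the definition of $\bar{C}_A$, and bounding $|\bar{C}_A|$ by $|C_A^n|=s^{un}$. You merely make explicit the details the paper leaves implicit (the bijection $c_1\mapsto(c_1,\Phi(c_1),\dots,\Phi^{n-1}(c_1))$ and the vacuity of the quantifier range $n-t<i\le n$ at $t=0$), which is a welcome tightening rather than a different approach.
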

\begin{proof}
	It is easy to obtain $|\hat{C_A}| = s^u$ and $|C_{A}^{n}| = s^{un}$ because the number of elements in $\hat{C_A}$ is uniquely determined by the number of $c_1 \in C_{A}$. For each $\hat{c} \in \hat{C_A}$, there exist $\bar{c} \in \bar{C_A}$ and $\hat{c}=\bar{c}$ when $t=0$, hence $\hat{C_A} \subseteq \bar{C}_A$. The maximum value of $|\bar{C}_A|$ is $s^{un}$, but it often does not reach the maximum value. For example, in the CAM in Counterexample \ref{CE}, $|\hat{C_A}| \leq |\bar{C}_A| = s^u$.
\end{proof}

\begin{theorem}
	If $A \stackrel{g}{\rightarrow} A'$, the injective mapping $h_g:\bar{c} \rightarrow c$ maps the continuous configuration $\bar{c} \in \bar{C}_A$ to the configuration $c \in C_{A'}$ where the $n$ states of each cell in $c$ are equal to the $n$ historical states of the corresponding cell in $\bar{c}$ at the same position.
\end{theorem}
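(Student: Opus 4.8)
The plan is to exhibit $h_g$ explicitly as a \emph{reshaping} of the data contained in a window of $n$ consecutive configurations, and then to establish injectivity by writing down an explicit left inverse; well-definedness will come for free from the same construction. Write a continuous configuration as $\bar{c}=(c_{1+t},\cdots,c_{n+t})\in\bar{C}_A$, and for each cell position $i$ let $c_{k+t}^{\,i}\in S$ denote the state of cell $i$ in the configuration $c_{k+t}$. I would \emph{define} $h_g(\bar{c})=c$ to be the configuration of $A'$ whose value at cell $i$ is the ordered tuple of the $n$ historical states of that cell,
\begin{equation}
	c(i)=\bigl(c_{1+t}^{\,i},c_{2+t}^{\,i},\cdots,c_{n+t}^{\,i}\bigr).
\end{equation}

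First I would check well-definedness. Since each $c_{k+t}^{\,i}\in S$, the tuple $c(i)$ lies in $S^{n}=S'$, so $c$ is a genuine map $\mathbb{Z}\to S'$, i.e.\ $c\in C_{A'}$; moreover the recipe uses only the $n$-tuple $\bar{c}$ itself (and the identification $M'=M$, which makes ``the corresponding cell at the same position'' unambiguous), so no arbitrary choice enters and $h_g$ is a well-defined function $\bar{C}_A\to C_{A'}$.

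Next, injectivity. For each index $k\in\{1,\cdots,n\}$ define the component-extraction map $\pi_k:C_{A'}\to C_A$ by letting $\pi_k(c)$ be the configuration whose value at cell $i$ is the $k$-th coordinate of $c(i)\in S^{n}$. Applying this to $c=h_g(\bar{c})$ recovers exactly $\pi_k(h_g(\bar{c}))=c_{k+t}$, so $(\pi_1,\cdots,\pi_n)\circ h_g$ is the identity on $\bar{C}_A$. Consequently, if $h_g(\bar{c})=h_g(\bar{c}')$ then $c_{k+t}=c'_{k+t'}$ as configurations for every $k$, whence $\bar{c}=\bar{c}'$ as elements of $C_A^{\,n}$; thus $h_g$ is injective.

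The real content is conceptual rather than computational: the map simply transposes an $n\times u$ array of states (the $n$ stacked configurations on $u$ cells) into a single row of $u$ cells, each holding a length-$n$ history, and a transpose is invertible on its image. I therefore expect the only ``obstacle'' to be notational bookkeeping -- keeping the window index $t$, the temporal index $k$, and the coordinate position inside $S^{n}$ aligned -- together with the routine observation that two elements of $\bar{C}_A$ are equal precisely when their underlying $n$-tuples of configurations agree, so that the parameter $t$ appearing in the definition of $\bar{C}_A$ plays no role in the identity of a continuous configuration.
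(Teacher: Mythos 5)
Your proof is correct, and on the key point---injectivity---it takes a genuinely different and in fact sounder route than the paper. The paper's own proof is two sentences: it defines the same reshaping map you do (each cell of the image holds the $n$ historical states of the corresponding cell of $\bar{c}$), and then asserts injectivity from the cardinality comparison $|\bar{C}_A| \le s^{un} = |C_{A'}|$. That inference is not valid as stated: a map from a smaller set into a larger one need not be injective (a constant map satisfies the same inequality), so the paper's argument at best gestures at the fact that the reshaping loses no information. Your argument supplies exactly the missing content: by exhibiting the component-extraction maps $\pi_k$ and checking that $(\pi_1,\cdots,\pi_n)\circ h_g$ is the identity on $\bar{C}_A$, you produce an explicit left inverse, which immediately forces $h_g$ to be injective. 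Your closing remark that the window index $t$ plays no role in the identity of an element of $\bar{C}_A$ is also a worthwhile point the paper leaves implicit, since elements of $\bar{C}_A$ are simply $n$-tuples of configurations regardless of where they arose in the evolution; without it, one could worry that the same tuple indexed by different $t$ gives two ``different'' domain elements with the same image. In short: same construction of $h_g$ as the paper, but your left-inverse argument for injectivity is different from, and in effect repairs, the paper's cardinality-based one.
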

\begin{proof}
	We can uniquely map each element $\bar{c}$ to configurations of $A'$ where the state of each cell in $C_A$ consists of the $n$ states of the corresponding cell in $\bar{c}$. For $|\bar{C}_A| \le s^{un}, |C_{A'}| = s^{un}$, the map $h_g$ is injective.
\end{proof}

\begin{theorem}
	\label{relation}
	Assuming $A \stackrel{g}{\rightarrow} A'$, if $A'$ is invertible, then $A$ is invertible, but not vice versa. 
\end{theorem}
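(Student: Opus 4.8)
The plan is to reformulate both notions of invertibility as injectivity statements about a single ``window-shift'' map and then exploit that the MCA operates on a strictly larger domain than the CAM ever visits. Recall that a configuration $c\in C_{A'}$ stacks $n$ consecutive CAM configurations into each cell, so the encoding $h_g$ of the preceding theorem extends to a bijection $H:C_A^n\to C_{A'}$ between \emph{all} $n$-tuples of CAM configurations and MCA configurations, of which the reachable encodings $h_g(\bar C_A)$ form only a subset of the image. The rule $f=\varphi_m\odot\beta_n$ of Eq.~\ref{rule} is built precisely so that one application of $F$ mimics one CAM step followed by discarding the oldest configuration. Writing $T$ for the corresponding window-shift on $C_A^n$, sending $(c_{t-n+1},\dots,c_t)$ to $(c_{t-n+2},\dots,c_{t+1})$, the first step is to record the intertwining identity $F\circ H=H\circ T$, which holds on the whole domain because $f$ is defined for arbitrary stacked states. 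By the definition of invertibility, $A$ is invertible exactly when each reachable window determines its predecessor uniquely, i.e. when $T$ is injective on $\bar C_A$, whereas $A'$ is invertible exactly when $F$, equivalently $T=H^{-1}\circ F\circ H$, is a bijection of the full space.

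For the forward implication I would argue as follows. Assume $A'$ is invertible, so $F$ is a bijection of $C_{A'}$; since $H$ is a bijection and $F\circ H=H\circ T$, the map $T$ is a bijection of $C_A^n$. Now fix any $t>n$ and consider the reachable window $W_t=(c_{t-n+1},\dots,c_t)$ with its actual predecessor $W_{t-1}=(c_{t-n},\dots,c_{t-1})$, so $T(W_{t-1})=W_t$. Because $T$ merely shifts and appends, any tuple $W'$ with $T(W')=W_t$ must have the form $W'=(c',c_{t-n+1},\dots,c_{t-1})$, leaving only the entry $c'$ free. Applying $H$ and the intertwining gives $F(H(W'))=F(H(W_{t-1}))$, and injectivity of $F$ together with injectivity of $H$ forces $W'=W_{t-1}$, hence $c'=c_{t-n}$. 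Thus $c_{t-n}$ is uniquely recovered from $W_t$, which is precisely the bijection $c_{t-n}\stackrel{\Gamma_{n-1}(c_{t-1})}{\longleftarrow}c_t$ demanded by the definition, so $A$ is invertible.

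For the failure of the converse I would simply invoke Counterexample~\ref{CE}. There the identity CAM is invertible, yet its isomorphic MCA has the rule computed in Eq.~\ref{iso}, namely $f(x_1x_2x_3,y_1y_2y_3,z_1z_2z_3)=y_2y_3y_3$. This local rule ignores the neighbor states $x,z$ entirely and discards the first component $y_1$ of the centre cell, so two MCA configurations differing only in such discarded coordinates share the same image under $F$. Hence $F$ is not injective and $A'$ is not invertible, even though $A$ is; this exhibits an invertible CAM whose MCA fails to be invertible and settles the ``not vice versa'' part.

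The conceptual point, and the main thing to get right, is that CAM invertibility is injectivity of $T$ only on the reachable set $\bar C_A$, while MCA invertibility demands injectivity of $T$ on the whole of $C_A^n$; the inclusion $\bar C_A\subseteq C_A^n$ makes the first direction automatic but leaves room for the second to fail. The one delicate step is the forward direction: I must ensure that uniqueness of $c_{t-n}$ is established against \emph{all} candidate predecessors and not merely the reachable ones, which is exactly why I would work with the full bijection $H$ rather than with $h_g$ alone, and why global bijectivity of $F$ (rather than injectivity merely on $h_g(\bar C_A)$) is the hypothesis actually used.
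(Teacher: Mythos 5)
Your proof is correct and rests on the same core mechanism as the paper's: injectivity of $F$ is pulled back through the stacking encoding to give injectivity of the CAM dynamics, and the converse fails because the reachable windows occupy only part of $C_{A'}$. Still, your execution differs usefully at the two places where the paper's proof is loose. In the forward direction the paper merely asserts that injectivity of $h_g$ on $\bar C_A$ ``ensures that the mapping $\Phi\circ B$ does not have many-to-one relationships''; your intertwining identity $F\circ H = H\circ T$, with $H$ a bijection of all of $C_A^n$ onto $C_{A'}$, converts that assertion into an actual argument, and it yields uniqueness of $c_{t-n}$ against every candidate predecessor, which is stronger than needed and hence sufficient. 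For ``not vice versa'' the paper only explains why its own forward argument does not reverse (invertibility of $A$ controls $F$ only on $h_g(\bar C_A)$, not on $C_{A'}\setminus h_g(\bar C_A)$), which by itself does not refute the converse; you complete this half by checking concretely that the MCA of Counterexample \ref{CE} has the non-injective rule $f(x_1x_2x_3,y_1y_2y_3,z_1z_2z_3)=y_2y_3y_3$ while the underlying identity CAM is invertible --- the explicit witness the paper leaves implicit. One small wording issue: in your second paragraph you call all-candidate uniqueness of $c_{t-n}$ ``precisely'' what the definition demands, yet your closing paragraph (correctly, and consistently with the paper) equates CAM invertibility with injectivity only on the reachable set $\bar C_A$; the distinction matters, because under the all-candidate reading the identity CAM would \emph{not} be invertible (any $c'$ satisfies $\Phi(B(c',c,\dots,c))=c$ there) and your own counterexample would collapse. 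Since proving the stronger uniqueness still gives the forward implication, the proof stands, but the two notions should not be conflated.
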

\begin{proof}
	Since $A'$ is invertible, $F$ can not cause many-to-one mapping of $C_{A'}$. Now, $h_g$ maps elements in $\bar{C}_A$ injectively that in $C_{A'}$, thus ensuring that the mapping $\Phi \circ B$ does not have many-to-one relationships. Therefore, $A$ is invertible.
	
	Conversely, if $A$ is invertible, although $\Phi \circ B$ cannot cause many-to-one relationships within $\bar{C}_A$, the number of elements in $\bar{C}_A$ is less than in $C_{A'}$. This only ensures that $F$ is invertible within the range of $h_g(\bar{C}_A)$ but does not guarantee that $F$ will not cause many-to-one relationships within the range of $C_{A'} \setminus h_g(\bar{C}_A)$. The error in Erroneous Theorem \ref{err} and Counterexample \ref{CE} lies in incorrectly equating the invertibility of $A$ and $A'$.
\end{proof}

\begin{theorem}
	MCA: $A'=\{S',M',f=\beta_n \odot \varphi_m\}$ is invertible iff $\varphi_m$ is invertible and $\beta_n$ is permutative.
\end{theorem}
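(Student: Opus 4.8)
The plan is to read the global map $F$ as the three-stage process dictated by Eq.~\ref{rule}: regard a configuration $c\in C_{A'}$ as assigning to each cell $i$ an $n$-tuple $(a^i_1,\ldots,a^i_n)$ ordered oldest-to-newest; first apply $\beta_n$ cell-by-cell to obtain an intermediate configuration $d$ with $d_i=\beta_n(a^i_1,\ldots,a^i_n)$; then apply $\Phi$ to $d$, producing the configuration of freshly appended states $b=\Phi(d)$; finally shift, so that the new tuple at cell $i$ is $(a^i_2,\ldots,a^i_n,b_i)$. The whole argument hinges on one structural observation: the oldest coordinate $a^i_1$ is discarded by the shift and enters $F$ only through $d_i$. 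Throughout, ``invertible'' means $F$ is bijective, and I will prove each direction by either constructing the inverse explicitly or exhibiting a collision; on finite periodic configurations injectivity and bijectivity coincide, and the inverse I build is itself local, so nothing is lost in the general setting.

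For the ($\Leftarrow$) direction, assume $\varphi_m$ is invertible and $\beta_n$ is permutative, and reconstruct the predecessor of an arbitrary image. The first $n-1$ coordinates of each output tuple give $(a^i_2,\ldots,a^i_n)$ directly. The $n$-th coordinates form $b=\Phi(d)$, so invertibility of $\varphi_m$ lets me recover $d$, hence each value $d_i=\beta_n(a^i_1,a^i_2,\ldots,a^i_n)$. Because $\beta_n$ is permutative in its first argument, the map $x\mapsto\beta_n(x,a^i_2,\ldots,a^i_n)$ is injective, hence bijective on the finite set $S$, so $a^i_1$ is the unique preimage of $d_i$ under it. This recovers $(a^i_1,\ldots,a^i_n)$ uniquely at every cell, so $F$ is injective and therefore bijective.

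For the ($\Rightarrow$) direction I argue the contrapositive, splitting into two cases. If $\beta_n$ is \emph{not} permutative, pick $x\neq y$ and $w\in S^{n-1}$ with $\beta_n(xw)=\beta_n(yw)$, and let $c,c'$ agree everywhere except that some cell $i$ carries the tuple $(x,w)$ in $c$ and $(y,w)$ in $c'$. Since $a^i_1$ influences $F$ only through $d_i$, and $d_i$ is unchanged by $\beta_n(xw)=\beta_n(yw)$, we get $F(c)=F(c')$ with $c\neq c'$. If instead $\beta_n$ \emph{is} permutative but $\varphi_m$ is not invertible, choose distinct $d\neq d''$ with $\Phi(d)=\Phi(d'')$; fix any suffix $w\in S^{n-1}$, and using that $x\mapsto\beta_n(x,w)$ is a bijection of $S$, define $c$ and $c'$ by giving cell $i$ the tuples $(\beta_n(\cdot,w)^{-1}(d_i),w)$ and $(\beta_n(\cdot,w)^{-1}(d''_i),w)$ respectively. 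Then $c\neq c'$ wherever $d_i\neq d''_i$, yet both share the same shift-suffix $w$ and satisfy $\Phi(d)=\Phi(d'')$, so $F(c)=F(c')$; again $F$ is not invertible.

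The step I expect to be the main obstacle is Case~B of the converse: I must realise two prescribed intermediate configurations $d$ and $d''$ as $\beta_n$-images of genuine MCA configurations that are forced to share identical shifted suffixes, since any discrepancy in $(a^i_2,\ldots,a^i_n)$ would by itself separate the outputs and spoil the collision. The resolution is precisely that this case may assume $\beta_n$ permutative (the non-permutative case having been dispatched separately), which makes $\beta_n(\cdot,w)$ a bijection and hence surjective onto every target value while fixing the suffix $w$ — so the freedom needed to hit $d$ and $d''$ lives entirely in the discarded oldest coordinate. Care is also needed to confirm that in Case~A the modified oldest coordinate truly propagates nowhere else, which follows from the same observation that $a^i_1$ feeds only $d_i$.
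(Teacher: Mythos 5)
Your proof is correct and follows essentially the same route as the paper: both rest on decomposing $F$ into the shift plus the composite $\Phi \circ B$, with the key observation that the discarded oldest coordinate enters $F$ only through $\beta_n$ while the suffix is carried along unchanged. The paper compresses both directions into a cardinality remark (``$c_i$, $d_{i+n-1}$ and $c_{i+n}$ have the same number of values, so both $B$ and $\Phi$ must form bijections'') supported by a figure, whereas you carry out the same argument in full detail --- an explicit inverse for sufficiency and explicit collisions for necessity.
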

\begin{proof}
	The invertibility of $A'$ is equivalent to the invertibility of its global mapping $F$. Each cell $A'$ has $n$ states, so each configuration $c \in C_{A'}$ can be regarded as a combination of $n$ single-state configurations $(c_i,\cdots,c_{i+n-1})\in C^n_{A}$. 
	The mapping relationship of the global transformation function $F$ is $(c_i,\cdots,c_{i+n-1}) \stackrel{F}{\rightarrow} (c_{i+1},\cdots,c_{i+n})$. And $f= \varphi_m \odot \beta_n$, so $F= \Phi \circ B$. Then the mapping $F$ can be decomposed into two parts, as shown in Fig. \ref{mapF}.
	
	\begin{figure}[h]
		\center
		\includegraphics[width=0.8\linewidth]{./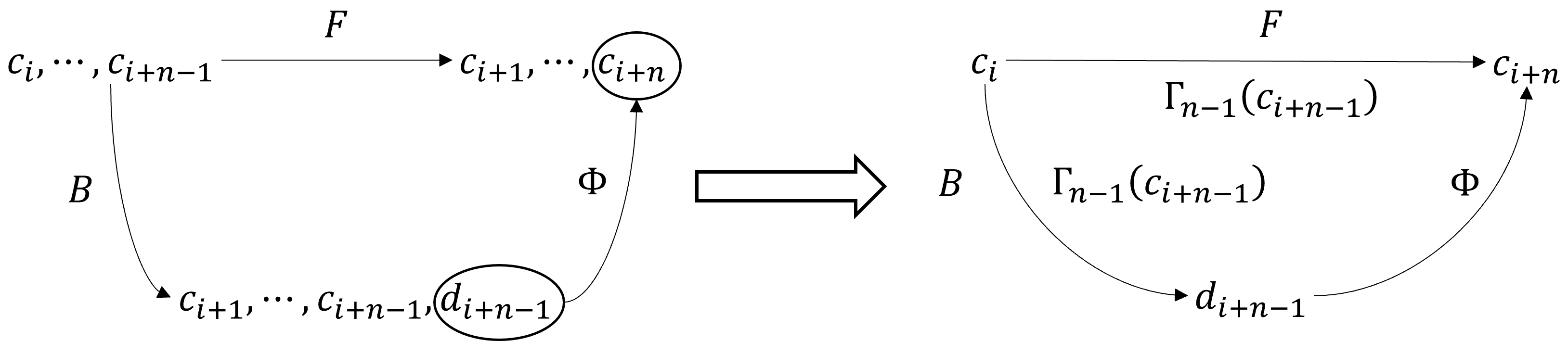}
		\caption{the invertibility of $MCA$}
		\label{mapF}
	\end{figure}

	On the surface, $F$ is the mapping on the left side of the arrow, but it can be simplified to the format on the right side. 
	Since $c_i,d_{i+n-1}$ and $c_{i+n}$ have the same number of values, both $B$ and $\Phi$ must form bijections for $F$ to be invertible. For $B$ to be bijective, $\beta_n$ must be permutative, that is $\forall x,y \in S$, $x \ne y$ and $\forall w \in S^{n-1}$, $\beta_n(xw) \ne \beta_n(yw)$ is satisfied.
	
\end{proof}

Compared to CAM, MCA has better properties because CAM does not strictly belong to the category of CA. This means that many conclusions and algorithms in CA cannot be applied. Comparatively, MCA is well-suited for Amoroso's surjectivity and injectivity algorithms, as well as other algorithms.

\section{The invertibility of CAM  \label{s4}}

We have successfully linked CAM with MCA bijectively and resolved the invertibility of MCA. However, the invertibility of CAM is a more complex issue. To address this, we need to introduce a type of graph widely used in CA.

\begin{definition}
	\cite{seck2018graphs} For CA $A'$, the cycle graph $G=(V,E)$ is defined as follows:
	\begin{itemize}
		\item The vertex set $V$ is equal to the configuration set $C_{A'}$.
		\item There exists a directed edge $(v_1,v_2) \in E$ from vertex $v_1$ to $v_2$ iff $\Phi(v_1)=v_2$.
	\end{itemize}
	
\end{definition}

\begin{proposition}
	\cite{seck2018graphs} A CA is invertible iff its cycle graph contains no branches.
\end{proposition}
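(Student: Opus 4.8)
The plan is to exploit the fact that the cycle graph $G$ is a \emph{functional graph}: since $\Phi$ is a well-defined map, every configuration $v \in C_{A'}$ has exactly one successor $\Phi(v)$, so every vertex of $G$ has out-degree precisely $1$. Under periodic boundaries with $u$ cells and $s$ states the vertex set $V = C_{A'}$ is finite, with $|V| = s^u$. The key translation I would fix first is that a ``branch'' is exactly a vertex of in-degree at least $2$, i.e.\ two distinct configurations $v_1 \ne v_1'$ with $\Phi(v_1) = \Phi(v_1')$; hence $G$ contains no branches if and only if every vertex has in-degree at most $1$, which is precisely the statement that $\Phi$ is injective.

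For the forward direction I would argue by contrapositive. Suppose $\Phi$ is not injective; then some vertex $v_2$ has in-degree $\ge 2$, so $G$ has a branch. I would then use that, since $G$ is finite and the sum of in-degrees equals the sum of out-degrees equals $|V|$, a vertex of in-degree $\ge 2$ forces the existence of a vertex of in-degree $0$ — a Garden-of-Eden configuration — so $\Phi$ fails to be surjective and the CA is not invertible. Thus invertibility implies the absence of branches.

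For the converse I would assume $G$ has no branches, so every vertex has in-degree at most $1$ and $\Phi$ is injective. Because $V$ is finite, an injective self-map of $V$ is automatically bijective (equivalently, every component is forced to be a pure cycle rather than a cycle with trees attached), so $\Phi$ is a bijection and the CA is invertible. Assembling the two directions yields the biconditional.

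The step I expect to require the most care is the precise graph-theoretic formalization of ``branch'' and its identification with in-degree $\ge 2$; once that dictionary is fixed, the remaining content is the standard pigeonhole fact that on a finite set injectivity, surjectivity, and bijectivity coincide, together with the observation that a non-invertible finite CA always possesses a Garden-of-Eden configuration.
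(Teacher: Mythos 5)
Your proposal is correct. Note that the paper itself offers no proof of this proposition --- it is quoted verbatim from the cited reference \cite{seck2018graphs} --- so there is no in-paper argument to compare against; your write-up supplies the standard proof that the citation relies on. The argument is sound: under periodic boundaries the configuration set is finite, every vertex of the cycle graph has out-degree exactly $1$ (it is a functional graph), and your dictionary identifying ``branch'' with a vertex of in-degree $\ge 2$ is faithful to the intended meaning (a tree hanging off a cycle), since in a finite functional graph the conditions ``some vertex lies on no cycle,'' ``some vertex has in-degree $\ge 2$,'' and ``some vertex has in-degree $0$'' are all equivalent by the counting identity $\sum_v \mathrm{indeg}(v) = \sum_v \mathrm{outdeg}(v) = |V|$. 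With that dictionary fixed, your two directions --- branch $\Rightarrow$ Garden-of-Eden vertex $\Rightarrow$ $\Phi$ not surjective $\Rightarrow$ not invertible, and no branch $\Rightarrow$ $\Phi$ injective $\Rightarrow$ bijective by finiteness --- are exactly the pigeonhole content of the proposition, and they are also consistent with how the paper later uses it (Lemma 1 and Theorem 5 tacitly use that branches point into circuits and that their junction vertices have two predecessors). The only point worth making explicit is that you treat invertibility as bijectivity of the global map $\Phi$ on the finite configuration set; this matches the paper's usage, though in the infinite-lattice CA literature invertibility additionally demands that the inverse map be a CA, a subtlety that vanishes here precisely because the configuration space is finite.
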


The invertibility of CA is directly related to its cycle graph, but cycle graphs have not been widely applied to determinations of CA invertibility. The primary reason is that the scale of cycle graphs is excessively large, reaching exponential levels relative to the number of cells. Constructing cycle graphs sequentially for a CA, starting from one cell to a large number, is nearly an impossible task. However, cycle graphs can be of great assistance in solving the invertibility issues of CAM, because we only need to construct a limited number of connected components of a cycle graph for a CA.

\begin{lemma}
	\label{branch}
	CAM $A$ is invertible iff all on-branch configurations in the cycle graph of its isomorphic MCA $g(A)$ are not in $h_g(\bar{C_A})$.
\end{lemma}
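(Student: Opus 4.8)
The plan is to reduce the CAM's invertibility to the injectivity of the MCA's global map $F$ restricted to the reachable set $R := h_g(\bar{C_A})$, and then to read off that injectivity from the cycle-graph structure. First I would reinterpret the definition of an invertible CAM: $A$ is invertible exactly when the one-step windowed transition $(c_{t-n},\dots,c_{t-1}) \mapsto (c_{t-n+1},\dots,c_{t})$ is injective on the windows that actually occur during evolutions, since inverting this shift is precisely the recovery $c_{t-n} \stackrel{\Gamma_{n-1}(c_{t-1})}{\longleftarrow} c_{t}$. Because $h_g$ is injective and intertwines this windowed transition with $F$ (the window $(c_i,\dots,c_{i+n-1})$ is sent to the MCA configuration whose cells carry these $n$ histories, and $F$ performs exactly the shift $\Phi\circ B$), the invertibility of $A$ is equivalent to the injectivity of $F$ restricted to $R = h_g(\bar{C_A})$. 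This is the corrected version of the comparison already flagged in Theorem \ref{relation}: one must restrict $F$ to the reachable configurations rather than demand injectivity on all of $C_{A'}$.

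The second ingredient I would establish is that $R$ is forward-invariant, $F(R)\subseteq R$, which holds because $\bar{C_A}$ is by construction closed under the evolution and $h_g$ carries that evolution to $F$. Consequently $F|_R$ is a self-map of the finite set $R$, so it is injective if and only if it is a bijection, i.e.\ if and only if the functional graph of $F|_R$ is a disjoint union of cycles with no transient off-cycle vertices. This is the finite-set specialization of the Proposition (a map is invertible iff its cycle graph has no branches), now applied to the induced graph on $R$.

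The bridge between ``transient within $R$'' and ``on-branch in the full cycle graph of $g(A)$'' is the step I expect to be the main obstacle, and I would treat it most carefully. For $v \in R$ I would argue both implications from forward-invariance: if $v$ lies on a cycle of the full cycle graph, then that entire cycle is the forward $F$-orbit of $v$ and hence lies in $R$, so $v$ is already cyclic inside the $R$-restricted graph; conversely if $v$ is on-branch (transient) in the full graph, its forward orbit stays in $R$ and never returns to $v$, so $v$ is transient in $R$ as well. Thus, for configurations of $R$, being on-branch in the cycle graph of $g(A)$ coincides with being transient under $F|_R$. Combining the three steps yields the chain: $A$ invertible $\iff$ $F|_R$ injective $\iff$ $R$ contains no transient vertex of $F|_R$ $\iff$ no on-branch configuration of the cycle graph of $g(A)$ lies in $h_g(\bar{C_A})$, which is the claim. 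The delicate point throughout is that a single on-branch configuration landing in $R$ already destroys the permutation structure of the finite self-map $F|_R$, whereas on-branch configurations lying outside $R$ are harmless --- exactly the phenomenon that invalidated the erroneous theorem.
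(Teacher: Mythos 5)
Your proof is correct and follows the same underlying strategy as the paper's own proof: both reduce the invertibility of $A$ to the absence of on-branch (transient) configurations in the portion of $g(A)$'s cycle graph that is actually reachable, i.e.\ lies in $h_g(\bar{C_A})$. If anything, your write-up is more rigorous than the paper's, which asserts without justification exactly the two steps you prove explicitly --- the forward-invariance $F(h_g(\bar{C_A})) \subseteq h_g(\bar{C_A})$, and the equivalence, for configurations in $h_g(\bar{C_A})$, between being transient in the full cycle graph and being transient in the restricted functional graph $F|_{h_g(\bar{C_A})}$.
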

\begin{proof}
	It has been proven in Theorem \ref{relation} that if there are no Garden-of-Eden configurations in $g(A)$, then $A$ is invertible. Now, suppose that there exist Garden-of-Eden configurations in $g(A)$; in this case, there must be branches on its cycle graph, and their preimage of $h_g$ must also be Garden-of-Eden configurations of $A$. If all configurations on branches are not in $h_g(\bar{C_A})$, then the cycle graph of $A$ also has no branches, implying that $A$ is invertible. Conversely, if there exist configurations on the branches of $g(A)$ in $h_g(\bar{C_A})$, then the cycle graph of $A$ will have branches.
\end{proof}

\begin{lemma}
	\label{ancestor}
	A configuration $c \in h_g(\bar{C}_A)$ iff it have an ancestor (including itself) $c' \in h_g(\hat{C}_A)$.
\end{lemma}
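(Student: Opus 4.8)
The plan is to prove Lemma \ref{ancestor} as a biconditional by unwinding the definitions of $\hat{C}_A$, $\bar{C}_A$, and the injection $h_g$. Recall that $\bar{C}_A$ consists precisely of those $n$-tuples of configurations $(c_{1+t},\cdots,c_{n+t})$ that arise during the forward evolution of the CAM starting from some genuine initial segment $(c_1,\cdots,c_n) \in \hat{C}_A$. Since $h_g$ is injective, working in $h_g(\bar{C}_A)$ versus $\bar{C}_A$ is interchangeable, so I would first translate the statement back to $\bar{C}_A$: a configuration $c = h_g(\bar{c})$ lies in $h_g(\bar{C}_A)$ iff $\bar{c}$ has an $F$-ancestor (under the MCA global map, equivalently a $\Phi\circ B$-ancestor under the CAM evolution) lying in $h_g(\hat{C}_A)$, where the ancestor chain is allowed to be trivial (length zero).

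First I would establish the easy direction. If $c \in h_g(\bar{C}_A)$, then by the definition of $\bar{C}_A$ there is some $t \ge 0$ and some $\hat{c} \in \hat{C}_A$ such that $\bar{c}$ is obtained from $\hat{c}$ by applying the evolution map $\Phi \circ B \circ \Gamma_n$ exactly $t$ times. Under the isomorphism $g$ and the commutation of $h_g$ with the global maps (the MCA map $F$ corresponds to one CAM evolution step, as established in Theorem \ref{relation} and the preceding definitions), this exhibits $h_g(\hat{c}) \in h_g(\hat{C}_A)$ as an ancestor of $c$ after $t$ applications of $F$. The case $t=0$ is exactly the ``including itself'' clause. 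This direction is essentially a restatement of membership in $\bar{C}_A$.

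For the converse, suppose $c$ has an ancestor $c' \in h_g(\hat{C}_A)$, say $c = F^t(c')$ for some $t \ge 0$. Then $c' = h_g(\hat{c})$ for some $\hat{c} \in \hat{C}_A$, and applying the commutation $F \circ h_g = h_g \circ (\Phi \circ B \circ \Gamma_n)$ repeatedly yields $c = h_g\bigl((\Phi\circ B\circ\Gamma_n)^t(\hat{c})\bigr)$. I would then check that $(\Phi\circ B\circ\Gamma_n)^t(\hat{c})$ satisfies the defining condition of $\bar{C}_A$ — namely that it is a continuous $n$-tuple generated from an element of $\hat{C}_A$ after $t$ steps — which is immediate from the definition of $\bar{C}_A$ with that same $t$. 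Hence $c \in h_g(\bar{C}_A)$.

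The main obstacle I anticipate is not logical depth but bookkeeping precision: I must verify carefully that $h_g$ genuinely intertwines the MCA global map $F$ with the CAM step $\Phi\circ B\circ\Gamma_n$, i.e. that $F \circ h_g = h_g \circ (\Phi\circ B\circ\Gamma_n)$ on $\bar{C}_A$, since the whole argument is the transport of the trajectory-membership definition of $\bar{C}_A$ across this intertwining. This commutation is morally guaranteed by the isomorphism $g$ and the shift-and-update structure of $f = \varphi_m \odot \beta_n$ in Eq. \ref{rule}, but I would state it explicitly as the crux, taking care that the ancestor relation ``including itself'' corresponds exactly to allowing $t=0$ and that the index ranges in the definition of $\bar{C}_A$ match the number of evolution steps applied.
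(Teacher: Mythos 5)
Your proposal is correct and takes essentially the same route as the paper's own proof: both directions reduce to unwinding the definition of $\bar{C}_A$ as the set of $n$-windows of trajectories emanating from elements of $\hat{C}_A$, transported through the injection $h_g$. If anything, your version is more careful than the paper's, which argues only informally via the contrapositive of one direction and never explicitly verifies the intertwining $F \circ h_g = h_g \circ (\Phi \circ B \circ \Gamma_n)$ that you rightly single out as the crux.
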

\begin{proof}
	As shown in Eq. \ref{CAM}, when $t \leq n$, only the function $\Phi$ operates in the evolution of CAM, and $\beta_n$ or $B$ can be considered as an identity function. The state at $t=n$, denoted as $(c_{1},\cdots,c_{n}) \in \hat{C_A}$, definitely satisfies $\forall i \in \mathbb{Z}, 1\le i<n, c_{i+1}=\Phi(c_{i})$. These configurations are called initial configurations, from which all configurations of CAM evolve. Therefore, if a configuration does not have an ancestor in $\hat{C_A}$, the time $n$ does not exist, so it is not included in the continuous configuration set $\bar{C_A}$.
\end{proof}

\begin{theorem}
	\label{fin}
	CAM $A$ is invertible iff all on-branch configurations in the cycle graph of its isomorphic MCA $g(A)$ are not in $h_g(\hat{C_A})$.
\end{theorem}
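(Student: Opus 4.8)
The plan is to obtain Theorem~\ref{fin} as a direct refinement of Lemma~\ref{branch}, exploiting Lemma~\ref{ancestor}. The statements of the two results differ only in that Lemma~\ref{branch} forbids on-branch configurations inside the continuous set $h_g(\bar{C}_A)$, whereas Theorem~\ref{fin} forbids them inside the smaller initial set $h_g(\hat{C}_A)$. Since Lemma~\ref{branch} already characterizes invertibility of $A$ by the absence of on-branch configurations in $h_g(\bar{C}_A)$, it suffices to establish the set-theoretic equivalence: some on-branch configuration of the cycle graph of $g(A)$ lies in $h_g(\bar{C}_A)$ if and only if some on-branch configuration lies in $h_g(\hat{C}_A)$.

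First I would dispose of the easy implication using the earlier containment $\hat{C}_A \subseteq \bar{C}_A$ together with injectivity of $h_g$, which gives $h_g(\hat{C}_A) \subseteq h_g(\bar{C}_A)$. Consequently any on-branch configuration found in $h_g(\hat{C}_A)$ automatically lies in $h_g(\bar{C}_A)$, so the absence of on-branch configurations in the larger set forces their absence in the smaller one.

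The substantive direction is the converse. Assuming an on-branch configuration $c$ lies in $h_g(\bar{C}_A)$, Lemma~\ref{ancestor} provides an ancestor $c' \in h_g(\hat{C}_A)$ of $c$ (possibly $c'=c$), so that $c = F^{k}(c')$ for some $k \ge 0$, where $F$ is the global map of the MCA $g(A)$. The key step is to show that $c'$ is itself on-branch. This follows from the functional structure of the cycle graph: because $F$ is a map, every vertex has out-degree one and each component is a single cycle with trees feeding into it, so being periodic is preserved under forward iteration. Hence if $c'$ were on a cycle, then $c = F^{k}(c')$ would also be on that cycle, contradicting that $c$ is on-branch; therefore $c'$ is on-branch and furnishes the required on-branch element of $h_g(\hat{C}_A)$.

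I expect the main obstacle to be phrasing the ``ancestors of transient configurations are transient'' step cleanly, since the cycle graph is only implicitly finite and eventually periodic. The most reliable route is to argue through the contrapositive --- that the forward orbit of any cyclic configuration remains cyclic --- which is immediate once out-degree one is noted. Substituting the resulting equivalence back into Lemma~\ref{branch} then yields Theorem~\ref{fin}.
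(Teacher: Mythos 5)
Your proof is correct, but it is organized differently from the paper's own argument. You obtain Theorem \ref{fin} as a formal corollary of Lemma \ref{branch} and Lemma \ref{ancestor}: by Lemma \ref{branch} it suffices to show that some on-branch configuration lies in $h_g(\bar{C}_A)$ iff some on-branch configuration lies in $h_g(\hat{C}_A)$; one direction is the inclusion $h_g(\hat{C}_A)\subseteq h_g(\bar{C}_A)$, and the other follows from Lemma \ref{ancestor} together with your observation that, in a functional graph (every vertex has out-degree one), the forward orbit of a cyclic vertex stays on its cycle, hence ancestors of on-branch vertices are themselves on-branch. The paper instead proves the theorem directly on the cycle graph: in one direction it argues that if no initial configuration lies on a branch then the branches never appear in the evolution of $A$ (the same cyclic-invariance fact, used implicitly), and in the converse direction it re-derives non-invertibility from scratch by noting that the junction vertex where a branch meets its circuit has at least two predecessors, one from the branch and one from the circuit, so the evolution started from an on-branch initial configuration exhibits a many-to-one collapse. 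Your route is more modular: it never re-argues the non-invertibility criterion, since Lemma \ref{branch} already encapsulates it, and it makes explicit the role of Lemma \ref{ancestor}, which the paper's proof invokes only tacitly; it also supplies the two details the paper glosses over, namely why ``the branch does not appear in the evolution'' suffices for invertibility, and why an on-branch initial configuration genuinely forces a collision among reachable configurations. What the paper's direct argument buys in exchange is the concrete geometric picture of where injectivity fails --- the branch--circuit junction --- which your reduction hides inside the proof of Lemma \ref{branch}.
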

\begin{proof}
	Since every node in the cycle graph has an out-degree of one, if branches exist, they must point toward a circuit. If there is no configuration $c \in h_g{(\hat{C_A})}$ on the branch, then the entire branch will not appear in the evolution of $A$, making $A$ invertible. Conversely, if there is a configuration $c \in h_g{(\hat{C_A})}$ on the branch the intersection vertex of the branch and circuit will have at least two predecessors—one from the branch and another from the circuit—making $A$ non-invertible.
\end{proof}

Theorem \ref{fin} is efficient because we can quickly determine whether a configuration is in $h_g(\hat{C_A})$ or not. Based on Theorem \ref{fin}, we can now fully determine the invertibility of CAM. The specific algorithm is as follows:
\begin{description}
	\item[Step 1] For CAM $A$, identify its isomorphic MCA $g(A)$.
	\item[Step 2] Find all Garden-of-Eden configurations of $g(A)$ according to Algorithm \ref{a1}, which modified from \cite{1972_Amoroso} to accommodate periodic boundaries.
	\item[Step 3] For each Garden-of-Eden configuration, construct its weakly connected component of the cycle graph.
	\item[Step 4] Determine whether there are initial configurations on the branches. If there are, then $A$ is not invertible. If there are none, then $A$ is invertible.
\end{description}

\begin{algorithm}[h]
	\small
	\SetAlgoLined
	\label{a1}
	\KwData{MCA $\{S, M, f\}$}
	\KwResult{Garden-of-Eden configuration set $P$}
	Let $G$: a directed graph; $L$: an empty queue; $P$: an empty set\;	
	$N_{ini} \leftarrow \{a_1a_2\cdots a_{m-1}a_1a_2\cdots a_{m-1}|a_1a_2\cdots a_{m-1} \in S^{m-1}\}$\;
	add $N_{ini}$ to $G$ and $L$\; 
	\While{$L$ is not empty}{
		pop $L$ and denote it with $N_{cur}$\;
		\If{$N_{cur} \cap N_{ini} == \emptyset$}{
			The path $p$ from $N_{cur}$ to $N_{ini}$ is a Garden-of-Eden configuration.\;
			add $p$ to $P$\;
		}
		\If{$not \exists N' = N_{child(c)}$ in $G$}{
			\For{each $c \in S$}{	
				$N_{child(c)} \leftarrow \{a_1a_2\cdots a_{m-1}b_1b_2\cdots b_{m-1}|a_1a_2\cdots a_{m-1}b_1b_2\cdots b_{m-1} \in S^{2m-2}, \exists b_0 \in S, a_1a_2\cdots a_{m-1}b_0b_1\cdots b_{m-2} \in N_{cur}$ and $f(b_0b_1\cdots b_{m-1}) = c \}$\;
				add an edge labeled $c$ from $N_{cur}$ to $N_{child(c)}$\;
				add $N_{child(c)}$ to $G$ and $L$\;
			}
		}
		
	}
	return $P$\;
	\caption{Garden-of-Eden configurations}
\end{algorithm}

\begin{example}
	We reassess the invertibility of the CAM in Counterexample 1. This CAM is invertible.
	\begin{description}
		\item[Step 1] Its isomorphic MCA $g(A)$ has been obtained in Example \ref{e1} and Eq. \ref{iso}.
		\item[Step 2] The Garden-of-Eden configurations of $g(A)$ include four configurations with one cell: 001, 010, 101 and 110.
		\item[Step 3] For each Garden-of-Eden configuration, construct the cycle graph as shown in Fig. \ref{CG}.
		\begin{figure}[h]
			\center
			\includegraphics[width=0.4\linewidth]{./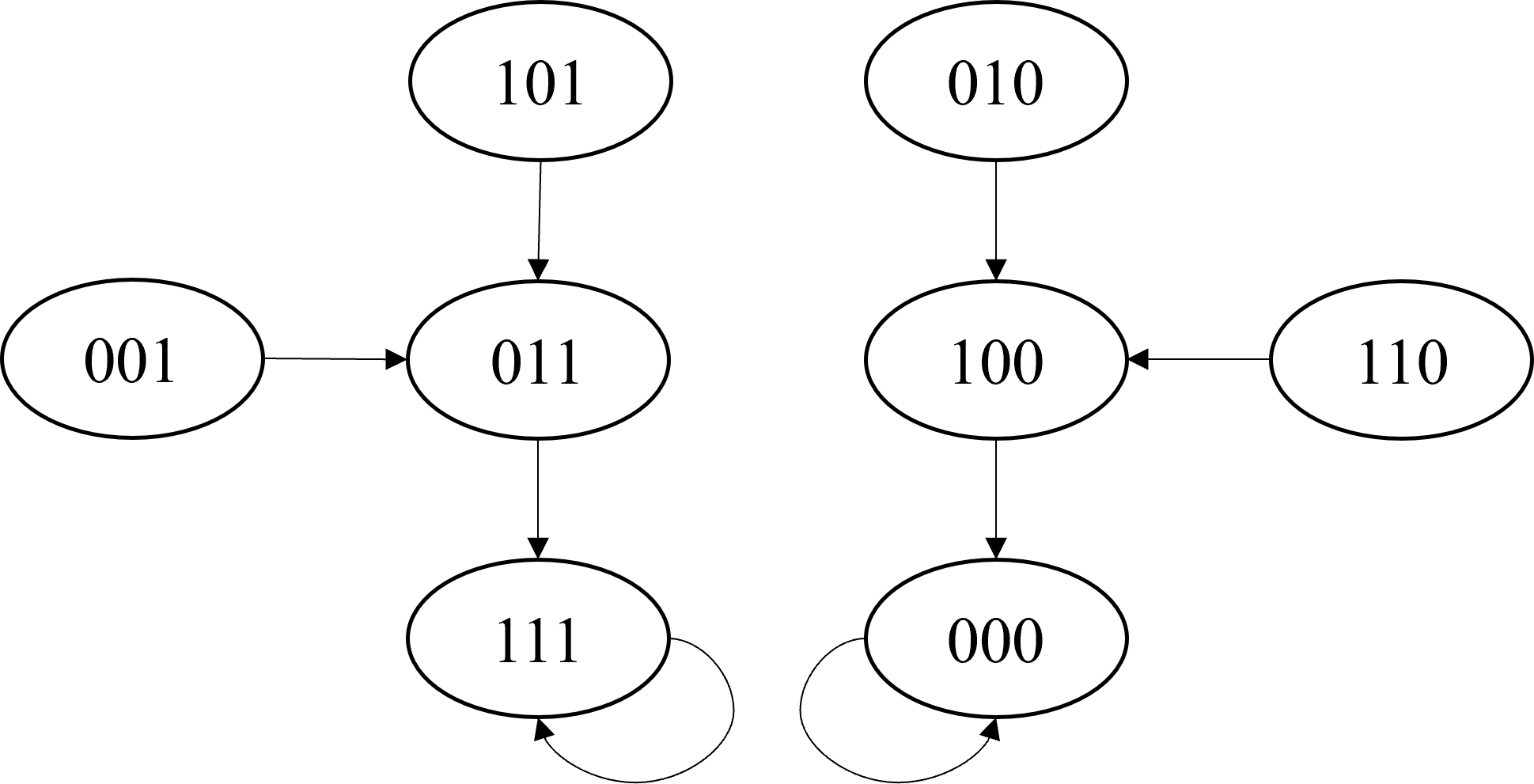}
			\caption{the cycle graph of $g(A)$}
			\label{CG}
		\end{figure}
		\item[Step 4] The six nodes on the branches in Fig. \ref{CG} are not within the initial configurations (requiring $a=b=c$), thus this CAM is invertible.
	\end{description}
\end{example}

\section{Conclusion  \label{s5}}

In this paper, we identified errors in the existing conclusions regarding the invertibility of CAM, introduced the concept of MCA, and resolved the invertibility issues of CAM. Both MCA and CAM, as discrete dynamic systems with memory, have their detailed relationships of invertibility illustrated in Fig. \ref{MCAM} We not only identified errors that the incorrect theorem could cause in practical applications but also provided a more suitable model for certain scenarios, that is the MCA.
\begin{figure}[h]
	\center
	\includegraphics[width=0.45\linewidth]{./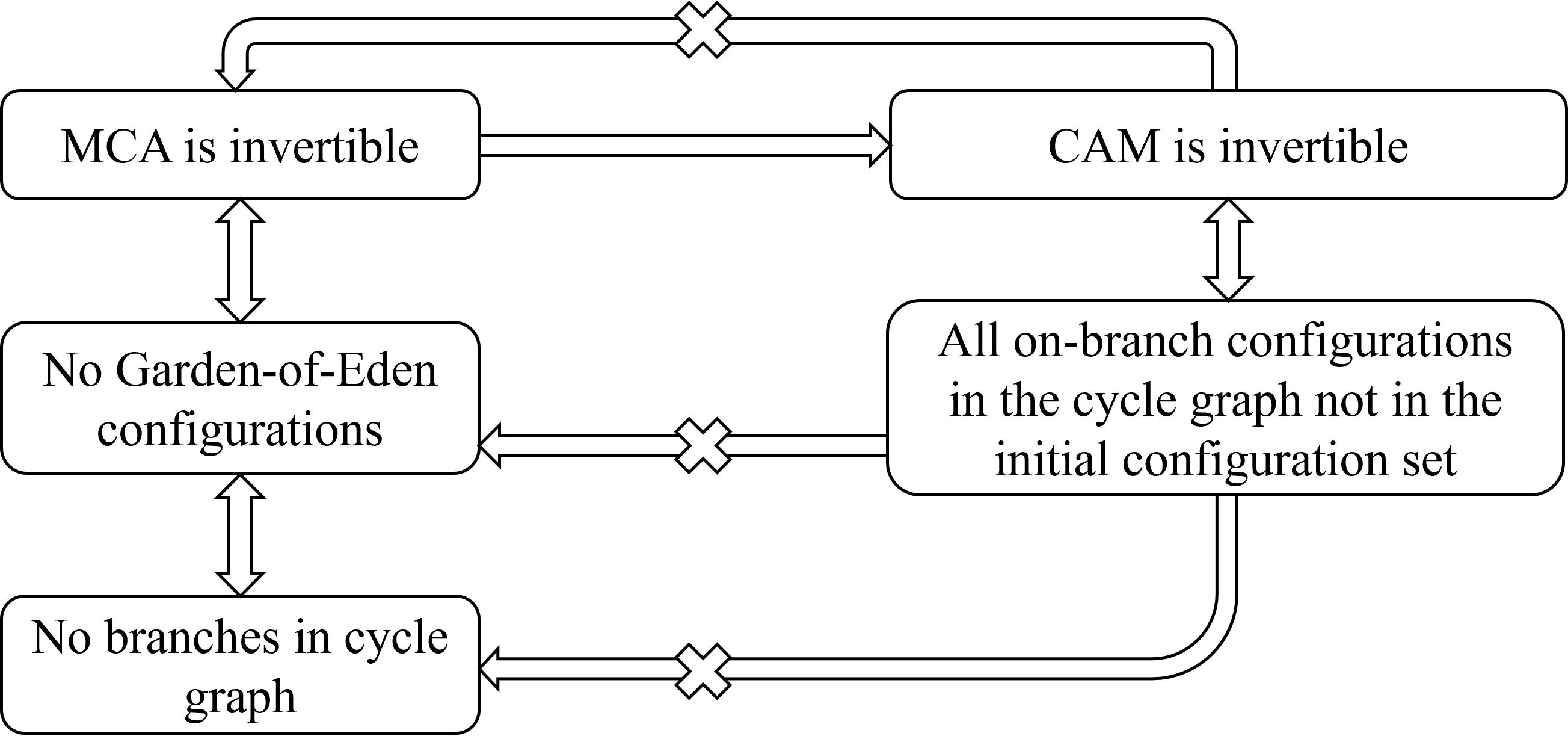}
	\caption{the invertibility relationship between CAM and MCA}
	\label{MCAM}
\end{figure}
%

%
%
%
%

\section*{Acknowledgments}
This study is financed by Tianjin Science and Technology Bureau, finance code:21JCYBJC00210.

\bibliographystyle{unsrt}
\bibliography{references}

\end{document}